\documentclass[twocolumn]{autart}
\usepackage{graphicx}
\usepackage{amsmath,amssymb,amsfonts}
\usepackage{subcaption}
\usepackage{adjustbox}
\graphicspath{{fig/eps/}{fig/pdf/}{fig/png}}
\usepackage{breqn}
\usepackage{xpatch} 
\makeatletter 
\xpatchcmd{\runningauthor@fmt}{\global\edef}{\protected@xdef}{}{}
\xpatchcmd{\runningauthor@fmt}{\global\edef}{\protected@xdef}{}{}
\xpatchcmd{\author@fmt}{\edef}{\protected@edef}{}{}
\def\@xnamedef#1{\expandafter\protected@xdef\csname #1\endcsname}
\def\ead@au#1{\protected@edef\@ead@au{#1}}
\def\add@xtok#1#2{\begingroup
  \protected@xdef\@act{\global\noexpand#1{\the#1#2}}\@act
\endgroup}
\def\no@harm{}
\makeatother

\begin{document}

\begin{frontmatter}

\title{Impedance Control via Generalized Output Regulation\thanksref{footnoteinfo}}

\thanks[footnoteinfo]{The material in this paper was not presented at any conference.}

\author[USP]{Hélio Jacinto Cruz Neto}\ead{helio.neto@usp.br},    
\author[USP]{Victor Shime}\ead{victor.shime@usp.br},               
\author[USP]{Thiago Boaventura}\ead{tboaventura@usp.br}  

\address[USP]{Department of Mechanical Engineering, São Carlos School of Engineering, University of São Paulo, Av. Trabalhador São Carlense, São Carlos, 13566-590, SP, Brazil}

\begin{keyword}                           
Generalized output regulation; Internal model principle; Compliant control; Disturbance rejection.
\end{keyword}      

\begin{abstract}                          
Achieving a desired impedance during physical interaction remains a central problem in compliant control. In many practical implementations, admittance control combined with linear position controllers is employed, but this structure typically results in only approximate impedance behavior. This paper establishes a rigorous equivalence between the impedance realization problem and the generalized output regulation theory, and shows that, under the admittance architecture and assuming the inner position controller is restricted to linear combinations of commonly used signals, a unique control law exists that achieves exact compliant control. The resulting controller introduces additional terms with respect to a conventional PD formulation, while preserving a simple structure. Numerical simulations are used to illustrate the effectiveness of the proposed approach under nominal conditions and in the presence of model uncertainties, and to compare its performance against a standard PD controller. The results demonstrate improved interaction performance and robustness, supporting the theoretical analysis.
\end{abstract}

\end{frontmatter}

\section{Introduction}
In robotics, interaction problems arise whenever a robot exchanges forces with the environment, objects, or humans, rather than simply following a predefined motion. Such situations occur in a wide range of domains, including industrial assembly \cite{park2017}, physical rehabilitation via exoskeletons \cite{Escalante2023,Li2018}, human-robot collaborative manipulation \cite{Jiao2022,Ta2021}, and legged locomotion on unstructured terrain \cite{Bermudez2025,Pedro2024}. As interaction appears across numerous robotic applications, properly handling it is therefore fundamental to ensure stability, robustness and safety in uncertain environments. 

A prominent  class of solutions for addressing interaction problems is based on compliant control. In this approach, the controller is designed to impose a specified  relation between measured interaction forces and the robot's motion, so that the system's response to external forces is governed by a chosen stable operator \cite{schumacher2019}, often a mass-spring-damper one. Embedding the desired compliant behavior directly into the closed-loop dynamics enables the controller to manage both free and constrained motion without the need for mode switching \cite{kang2009} or dealing with the problems of identifying constrained directions \cite{hogan2022}.  

Compliant control may be realized using either impedance or admittance formulations. Both approaches aim to achieve the same force–motion relationship; however, they implement it differently: impedance control maps motion deviations to commanded forces, whereas admittance control maps measured forces to desired trajectories \cite{Calanca2016}. In his seminal work \cite{Hogan1985}, Hogan suggested a feedback strategy to implement compliant control that belongs to the class of methods now referred to as impedance control. The resulting control law produced the desired impedance behavior under ideal conditions, but it relied on the complete robot model and offered no tunable gains to improve performance. For this reason, this approach has also been referred to in the literature as dynamics-based impedance control \cite{kang2009,Valency2003}. 

One alternative to the dependence on an exact model is the admittance formulation. The term admittance control dates back to \cite{newman1992}, although the same concept had already been implemented earlier under the name position-based impedance control \cite{lawrence1987}. By converting the impedance-tracking problem into a position-tracking one, admittance implementations can ensure robustness 
via a properly designed position controller \cite{kang2009}. Robust admittance implementations have been achieved using several control strategies, most commonly variants of sliding-mode and adaptive control  \cite{Dewit1997,MUJICA2023,Shen2020}. Though they improve robustness, these methods typically rely on an online model of the robot dynamics or on complex algorithms, which hinders their practical deployment.

Other common admittance implementations can be described by the general structure illustrated in Fig.~\ref{im:adm}, which is adapted from the review papers on compliant control \cite{Calanca2016,schumacher2019} and captures their shared essential properties. Although the position controller could use additional inputs, several works \cite{Calanca2017,Lawrence1988,Ott2010,Ott2015,Hauwermeiren2025,Volpe1993} model it as a rational transfer function (usually a variant of the PID controller). As depicted in Fig.~\ref{im:adm}, this representation implies that the position controller uses only the position error signal for feedback, with no additional inputs. While such implementations might render a passive impedance at the interaction port \cite{Calanca2017}, their ability to achieve the prescribed target impedance is not generally established. In fact, we show that the gap is not merely a matter of tuning: if the controller structure excludes additional signals (notably interaction force and acceleration-related feedback), then gain adjustment alone cannot achieve exact tracking of the prescribed impedance outside a limited achievable set. Moreover, whether a controller can passively render a given target stiffness relative to the environment’s stiffness typically depends on the inner-loop controller. For example, \cite{Calanca2017} shows that a modified impedance architecture incorporating acceleration feedback can, in theory, passively realize arbitrary target impedances. For other commonly used controllers, passivity requires the target stiffness to be no greater than the environment stiffness \cite{Vallery2008}.  

\begin{figure}[h!]
  \centering
  \includegraphics[scale = 0.60]{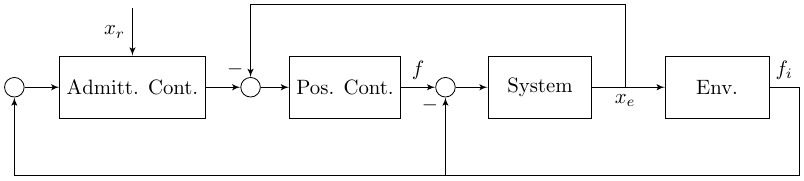}
\caption{Block diagram illustrating  the admittance structure.}\label{im:adm}
\end{figure}

Given this context, the objective of this paper is to determine which position controllers within the admittance architecture -- whose control laws are restricted to linear combinations of standard signals and their derivatives (joint position, interaction force, and admittance reference position) -- can realize the prescribed target impedance, and to compare the performance of those controllers with other commonly used position controllers. To this end, we employ the generalized output regulation theory \cite{saberi2001}, which extends classical output regulation to exosystems with inputs by incorporating disturbance decoupling results. Because the theory provides necessary and sufficient conditions for output regulation, casting the admittance control problem in this framework enables us to derive the \emph{unique} control law structure that attains the prescribed target impedance,  showing that commonly added force/acceleration feedback terms are necessary rather than heuristics.

The main contributions of this paper are: (i) derive rigorous conditions under which the impedance control problem can be formulated within the generalized output regulation framework; (ii) prove that, under the admittance architecture and assuming the inner position controller is restricted to linear combinations of commonly used signals and their derivatives (joint position, interaction force, and admittance reference position), there exists a unique linear control law structure that realizes the prescribed target impedance, which directly implies the necessity of these additional signals and establishes a fundamental tuning limit for incomplete controller structures; (iii) show how the derived control law can be extended to systems with soft joints, avoiding the issues associated with non-collocated control; (iv) compare the performance of the derived control law with a conventional PD controller.

\section{Generalized output regulation}

This section reviews the results of the generalized output regulation problem that are relevant for formulating the compliant control problem as an instance of output regulation. For further details, the reader is referred to \cite{saberi2001}.

Consider a linear system described by

\begin{subequations}\label{eq:wsystem}
\begin{align}
\dot{x} & = Ax+Bu+E w, \label{eq:system}\\
\dot{w} & = Sw+D r, \label{eq:exosystem}\\
e & = Cx+F w, \label{eq:error}
\end{align}
\end{subequations}
where \eqref{eq:system} describes the plant dynamics, \eqref{eq:exosystem} represents the exosystem, which may generate reference signals, disturbances, or both, and \eqref{eq:error} defines variables that must be controlled. Also, $x \in \mathbb{R}^n$ is the state, $u \in \mathbb{R}^m$ is the control input, $w \in \mathbb{R}^p$ is the exosystem state, $r \in \mathbb{R}^q$ is the exosystem input, and $e \in \mathbb{R}^s$ is the error. Capital letters denote matrices of real numbers with appropriate dimensions. 

The generalized output regulation problem consists in determining, if possible, a control input $u$ with a prescribed structure (e.g., output feedback, state feedback) such that the resulting closed-loop system is asymptotically stable and the error converges to zero as time tends to infinity, for any initial state and piecewise-continuous signals $r(t)$. The difference between the generalized and the standard \cite{huang2004} output regulation problems is that, in the latter, the exosystem has no inputs. In this paper, we review the relevant results for the case of state feedback controllers,

\begin{equation}\label{eq:sf}
    u = Kx+Lw.
\end{equation}

To properly frame the results, we introduce the following definition and assumptions.

\begin{defn}
    The stabilizable weakly unobservable subspace $\mathcal{V}(A,B,C)$ is the maximal subspace of $\mathbb{R}^n$ that is $\left(A+BK\right)$-invariant and contained in $\mathrm{ker}\left(C\right)$ such that the eigenvalues of $\left(A+BK\right)|\mathcal{V}$ have negative real parts for some $K$.
\end{defn}
\begin{assum}\label{ass:stab}
The pair $(A,B)$ is stabilizable.
\end{assum}
\begin{assum}\label{ass:antiHur}
$S$ has no eigenvalues with negative real parts.
\end{assum}

Given these conditions, the theorem stated below is an immediate consequence of Theorem 4 in \cite{saberi2001}:

\begin{thm}\label{theor:exactreg}
Consider that the system in \eqref{eq:wsystem} satisfies Assumptions \ref{ass:stab} and \ref{ass:antiHur}, and is coupled with the state feedback controller in \eqref{eq:sf}. Then, there exist matrices $K$ and $L$ such that output regulation is achieved, i.e., $\lim_{t\to\infty} e(t)=0$ for all initial conditions and any piecewise continuous signal $r(t)$ and $(A+BK)$ is Hurwitz, if and only if the following conditions are true:
\end{thm} 
\begin{enumerate}
    \item\label{cond:1} There exist matrices $\Pi$ and $\Gamma$ that solve the linear matrix equations
    \begin{equation}\label{eq:reg}
    \begin{aligned}
    & \Pi S = A\Pi+B\Gamma+E,\\
    & 0 = C\Pi+F,
    \end{aligned}
    \end{equation}
    \item\label{cond:2} $\mathrm{im} (\Pi D)\subseteq \mathcal{V}(A,B,C)$.
\end{enumerate}

The first condition of Theorem \ref{theor:exactreg} characterizes the regulator equations \cite{huang2004}, which constitute necessary and sufficient conditions for achieving output regulation when $r(t)=0$. Criteria for their solvability and corresponding solution methods are detailed in \cite{huang2004}. 

\begin{rem}\label{rem:convenience}
Assumption \ref{ass:antiHur} is made only for convenience and is not necessary for the solution of the regulator equations. See  \cite{isidori2017} for a comprehensive discussion. 
\end{rem}

If the regulator equations are satisfied and the exosystem state gain in \eqref{eq:sf} is chosen as

\begin{equation}\label{eq:exo_gain}
    L = \Gamma-K\Pi,
\end{equation}
then, the closed-loop system in \eqref{eq:wsystem} can be rewritten in coordinate $\bar{x}=x-\Pi w$ as (omitting the exosystem dynamics)

\begin{equation}\label{eq:xbarsys}
\begin{aligned}
    & \dot{\bar{x}}=\left(A+BK\right)\bar{x}-\Pi Dr, \\
    & e=C\bar{x}.
\end{aligned}
\end{equation}

For this modified system, achieving output regulation requires solving a disturbance decoupling problem with internal stability \cite{saberi2001}, i.e., finding a matrix $K$ such that the transfer matrix from $r$ to $e$ is identically zero and $(A+BK)$ is Hurwitz. The second condition of Theorem \ref{theor:exactreg} establishes the criterion for the existence of such a $K$, and  \cite{wonham1985} outlines an algorithmic procedure to guide its computation.

\begin{rem}\label{rem:uniqueness}
If the matrix 
\begin{equation}\label{eq:sylv_cond}
   \begin{bmatrix}
       A-\lambda I & B \\
       C & 0
   \end{bmatrix} 
\end{equation}
is square (the number of inputs equals the number of outputs) and has full rank $\forall \lambda \in \sigma(S)$, where $\sigma(S)$ denotes the spectrum of $S$, then the solution to \eqref{eq:reg} is unique for any $E$ and $F$ \cite[Lemma~4.1]{isidori2017}, which in turn guarantees the uniqueness of the exosystem gain. For a minimal realization, i.e. $(A,C)$ is observable and $(A,B)$ is controllable, the values where the matrix in \eqref{eq:sylv_cond} is not full rank are the zeros of $C(Is-A)^{-1}B$ (transmission zeros). In this case, to achieve output regulation, the exosystem gain must be chosen as in \eqref{eq:exo_gain} \cite[Theorem~1.7]{huang2004}.    
\end{rem}

\section{Compliant Control Problem}\label{sec:ccp}

In tasks involving interaction between a system and an environment, it is often desirable to control the dynamic relationship between flow variables (typically position or velocity) and effort variables (typically force or torque) at the interaction port \cite{schumacher2019}. This section formalizes this objective, introduces the relevant variables and models, and states the exact compliant control problem under suitable assumptions.

Let $f_i$ denote the force exchanged between a system and an environment. To regulate this interaction, a desired impedance behavior is often specified by defining a target relationship between displacement error and interaction force. This relationship is typically expressed as a function $f_r\left(e,\dot{e},...,e^{(n)}\right)$, where $e=x_r-x_e$, with $x_r \in \mathbb{R}^e$ denoting the reference and $x_e \in \mathbb{R}^e$ the actual displacement at the interaction point. Although more general nonlinear  models have been proposed in the literature \cite{Shimizu2012}, $f_r$ is commonly specified as a second-order linear model
\begin{equation}\label{eq:ref_impedance}
    f_r = M_d\ddot{e}+B_d\dot{e}+K_de.
\end{equation}
The matrices $M_d$, $B_d$ and $K_d$ are often selected such that, after substituting $f_r$ by $f_i$ in \eqref{eq:ref_impedance}, the resulting ideal behavior operator defined by $H_1:f_i\mapsto \dot{e}$ is output strictly passive (OSP). 
For example, \cite{Haninger2022} chose $M_d$ and $B_d$ positive definite with $K_d=0$, whereas \cite{Rashad2022} chose $B_d$ and $K_d$ positive definite with $M_d=0$. 

Since $f_i$ arises from physical interaction, it is generally reasonable to assume that it depends on the system's motion at the interaction port -- that is, it can be expressed as a function of $x_e$ and possibly its time derivatives \cite{Hogan1984}. In many scenarios, $f_i$ may be represented as the output of a passive system $H_2$ with the velocity $\dot{x}_e$ at the interaction port as an input. Common spring-damper models of the environment are special cases of this class \cite{Dong2019,Yanghong2025}. As a consequence, the relation $f_i=f_r$ (depicted in Fig.~\ref{im:feedback}) can be viewed as the feedback interconnection of two systems: a passive system $H_2$, which generates $f_i$, and an OSP system $H_1$, associated with the target impedance relation $f_r$. In Figure~\ref{im:feedback}, $\dot{\bar{x}}_r$ represents a simple transformation on $\dot{x}_r$ to ensure the map from $\dot{\bar{x}}_r$ to $\dot{x}_e$ is also $H_1$. Due to the assumptions on the defined maps, the illustrated feedback interconnection guarantees that the closed-loop system is $\mathcal{L}_2$-finite gain stable \cite[Theorem~5.2]{brogliato2020}, and is one of the reasons that motivate defining the equality $f_r=f_i$ as an objective in compliant control strategies. This key assumption is stated formally below.

\begin{figure}[h!]
  \centering
  \includegraphics[scale = 0.8]{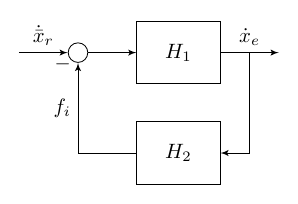}
\caption{Feedback connection illustrating  the impedance relationship.}\label{im:feedback}
\end{figure}

\begin{assum}\label{ass:OSP}
The map $H_1:-f_i\mapsto \dot{x}_e$ is OSP, and the map $H_2:\dot{x}_e\mapsto f_i$ is passive.
\end{assum}

\begin{rem}\label{rem:boundedxe}
    For the second order impedance model \eqref{eq:ref_impedance} Assumption \ref{ass:OSP} implies $B_d\succ0$ and $M_d,K_d\succeq0$. For $K_d\succ0$, one can further prove using a composite Lyapunov function (the mechanical energy for $H_1$ and the storage function derived from the passivity hypothesis for $H_2$) and boundedness theorems \cite[Theorem 8.4]{khalil2002} that the state $x_e$ is bounded. For $K_d\succeq0$, the state $x_e$ might not be bounded (consider, for instance, a single degree of freedom system and second order impedance model, with $k_d=0$ and $f_i=b\dot{x}_e$).
\end{rem}

\begin{rem}\label{rem:passiveimpedance}
    In some circumstances, a pure stiffness is selected for the impedance model \cite{Calanca2017}, which violates the OSP condition. Under this configuration, the relationship $f_i=f_r$, with $f_i$ passive, results in a Lyapunov stable system in the absence of a reference signal $x_r$, but may result in an $\mathcal{L}$-unstable (in the input-output sense) system otherwise. 
\end{rem}

Assume additionally that the system linearized model is given by \eqref{eq:system}, where the product $Ew$ accounts for the effects of the interaction force on the plant, i.e., $w$ may include components beyond $f_i$, but their influence on the plant is canceled by the particular structure imposed on the matrix $E$. Furthermore, the connection between the interaction port motion variable and the system state can be established by a suitable linear function.

With these preliminaries established, we can now formally state the exact compliant control problem.

\begin{prob}[Exact compliant control]\label{prob:exact}
Consider the system described by \eqref{eq:system}, where $Ew$ denotes the contribution of the interaction force $f_i$ to the plant dynamics. The reference impedance model is given in \eqref{eq:ref_impedance}, and the interaction port motion variable is assumed to be a linear function of the system state $x(t)$. Under assumptions \ref{ass:stab} and \ref{ass:OSP}, the exact compliant control problem consists in determining a control action $u$ such that 
\begin{enumerate}
    \item\label{cond:ex1} $\lim_{t\to\infty} \left(f_r(t)-f_i(t)\right)=0$ for every sufficiently smooth and bounded signal $x_r(t)$;
    \item\label{cond:ex2} $x(t)$ remains bounded for all $t\ge0$.
\end{enumerate}
\end{prob}

Some comments about the definition of Problem~\ref{prob:exact} are in order. In its conventional form \cite{Valency2003,Ott2015}, the compliant control problem is posed as finding a control action such that
\begin{equation}\label{eq:trad}
f_i=M_d\ddot{e}+B_d\dot{e}+K_de
\end{equation}
is satisfied at the interaction port, without prescribing the exact time domain of validity. In the definition of Problem \ref{prob:exact}, condition \ref{cond:ex1} requires this identity to hold asymptotically, which allows an equivalence between the compliant control problem and the standard output regulation formulation. 

Assumption \ref{ass:stab} is introduced to guarantee the existence of a controller that achieves exact compliant control, while Assumption  \ref{ass:OSP} is defined to guarantee a stable compliant behavior at the interaction port. Although this behavior is typically regarded as sufficient to keep the system state bounded \cite{Zou2022}, condition \ref{cond:ex2} explicitly enforces this property.

\section{Compliant control as a generalized output regulation problem}

This section reformulates the exact compliant control problem within the framework of generalized output regulation and, assuming static feedback from a specified measurement set, derives the control law structure that achieves exact regulation. Due to some particularities regarding the inclusion of viscoelasticity between the actuator and its load, we address the cases of stiff and soft (series elastic-damper) joints \cite{Calanca2016} separately. To highlight key aspects of the solution, we focus on the cases of single (stiff) and two (soft) degrees of freedom. By focusing on minimal yet representative dynamics, these models make it easier to establish direct links between design choices and system behavior \cite{Valency2003,Calanca2018,Keemink2018}. 

\subsection{Stiff Joints}

\begin{figure}[h!]
\centering
  \includegraphics[scale = 0.8]{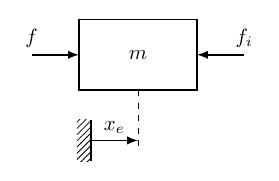}
\caption{Schematic representation of a stiff joint.}\label{im:stiff}
\end{figure}%

Figure \ref{im:stiff} shows a representative model of a stiff joint. Here, the term stiff denotes a connection between the actuator and its load that is rigid enough so that the entire system can be modeled as a lumped inertia $m$. The force $f$ is the control force and $f_i$ is the interaction force. The plant dynamics,
\begin{equation}\label{eq:stiff}
    m\ddot{x}_e=f-f_i,
\end{equation}
may be equivalently expressed in the state-space representation \eqref{eq:system}. Defining the state
\begin{equation}\label{eq:st_def}
    x= \left[x_e \,\, \dot{x}_e\right]^\intercal
\end{equation} 
and the control input $u=f$ leads to the system matrices
\begin{equation}\label{eq:st_sys}
    A = \begin{bmatrix}
        0 & 1 \\
        0 & 0
    \end{bmatrix}, \quad B=\left[\begin{array}{c}
    0 \\
    1/m  \end{array}\right].
\end{equation}
The explicit form of $E$, representing the effects of the interaction force on the plant, is given after the definition of the exosystem state in \eqref{eq:exo_state}.

Given this model, the design of a control input that ensures exact compliant regulation does not naturally align with conventional control problems (e.g. output tracking, disturbance rejection, noise attenuation). However, it can be reformulated within the output regulation framework by exploiting the relation $f_i=f_r$ to construct an auxiliary system that generates a reference trajectory for the inertia $m$. This generating reference strategy, known as position-based impedance control or admittance control \cite{schumacher2019}, is analogous to other control strategies employing auxiliary systems, but here the system generates the reference trajectory rather than estimating parameters (adaptive control) or reconstructing states (observers). The main idea is to replace the interaction port displacement in \eqref{eq:trad} by a reference generator $\hat{x}$. For the stiff joint, this substitution leads to
\begin{equation}\label{eq:aux}
    f_i=m_d(\ddot{x}_r-\ddot{\hat{x}})+b_d(\dot{x}_r-\dot{\hat{x}})+k_d(x_r-\hat{x}),
\end{equation}
where $\hat{x}$ represents the reference trajectory for $x_e$. Lowercase symbols are used for the impedance parameters to highlight that these quantities are scalars for the stiff system. 

\begin{lem}\label{lem:equi}
    Define 
    \begin{equation}
        \hat{e}=\hat{x}-x_e,\label{eq:ehat}
    \end{equation}
    where $\hat{x}$ is given by \eqref{eq:aux}. Assume that $\dot{\hat{e}}$ and $\ddot{\hat{e}}$ are uniformly continuous. Then, under Assumption \ref{ass:OSP}, the following holds:
    \begin{enumerate}
        \item If $k_d \ne0$, then
        \begin{equation}
            \lim_{t\to\infty} \left(f_r(t)-f_i(t)\right)=0 \Longleftrightarrow\lim_{t\to\infty} \hat{e}=0. 
        \end{equation}
        \item If $k_d=0$, then
        \begin{equation}
            \lim_{t\to\infty} \left(f_r(t)-f_i(t)\right)=0 \Longleftrightarrow\lim_{t\to\infty} \dot{\hat{e}}=0. 
        \end{equation}        
    \end{enumerate}
\end{lem}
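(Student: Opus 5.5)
Subtracting \eqref{eq:aux} from the definition \eqref{eq:ref_impedance} of $f_r$ (with $e=x_r-x_e$) cancels $x_r$ and gives
\begin{equation*}
f_r-f_i = m_d\ddot{\hat e}+b_d\dot{\hat e}+k_d\hat e .
\end{equation*}
So the lemma is a statement about a scalar linear ODE driven by $\delta:=f_r-f_i$. Its coefficients satisfy $b_d>0$ and $m_d,k_d\ge 0$ (Remark \ref{rem:boundedxe}). Hence all roots of $m_d s^2+b_d s+k_d$ have negative real part, except a root at $s=0$ exactly when $k_d=0$.

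\emph{Direction ($\Leftarrow$), case $k_d\neq0$.} Suppose $\hat e\to0$. Since $\dot{\hat e}$ is uniformly continuous and $\hat e$ has a finite limit, Barbalat's lemma gives $\dot{\hat e}\to0$. Applying Barbalat again to $\dot{\hat e}$, with $\ddot{\hat e}$ uniformly continuous, gives $\ddot{\hat e}\to0$. Therefore $\delta\to0$.

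\emph{Direction ($\Leftarrow$), case $k_d=0$.} Suppose $\dot{\hat e}\to0$. One application of Barbalat, using uniform continuity of $\ddot{\hat e}$, gives $\ddot{\hat e}\to0$, hence $\delta=m_d\ddot{\hat e}+b_d\dot{\hat e}\to0$.

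\emph{Direction ($\Rightarrow$).} Suppose $\delta\to0$. If $m_d>0$, view the ODE as a stable LTI system with input $\delta$. For $k_d\ne0$ the state $(\hat e,\dot{\hat e})$ is driven by a Hurwitz matrix. For $k_d=0$ the variable $v=\dot{\hat e}$ obeys the first-order stable system $m_d\dot v+b_d v=\delta$. Either way, a stable LTI system driven by an input converging to zero has a state converging to zero: split the variation-of-constants integral at $t/2$ and bound each piece using exponential decay. This gives $\hat e\to0$, respectively $\dot{\hat e}\to0$. If $m_d=0$, the ODE is first order in $\hat e$ (or algebraic in $\dot{\hat e}$ when $k_d=0$), and the same argument applies.

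The main obstacle is deciding where the uniform continuity hypotheses are genuinely needed. For ($\Rightarrow$) they are unnecessary because of input-to-state stability. For ($\Leftarrow$) they are exactly what Barbalat requires. Assumption \ref{ass:OSP} enters only through the sign conditions on the impedance parameters that make the polynomial stable up to the zero root when $k_d=0$. With $k_d=0$ one cannot conclude $\hat e\to0$, since $\hat e$ may drift; this explains the weaker conclusion in item 2.
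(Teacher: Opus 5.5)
Your proof is correct and follows essentially the same route as the paper's proof in Appendix~\ref{app:lem1}. Both subtract \eqref{eq:aux} from \eqref{eq:ref_impedance} to get $f_r-f_i=m_d\ddot{\hat e}+b_d\dot{\hat e}+k_d\hat e$, then use stability of $m_ds^2+b_ds+k_d$ (or, for $k_d=0$, of the reduced equation in $y=\dot{\hat e}$, which is algebraic when $m_d=0$) for the forward implication, and Barbalat's lemma with the uniform-continuity hypotheses for the converse; yours is just more explicit about the twofold Barbalat step and the variation-of-constants argument.
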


\begin{pf}
    See Appendix \ref{app:lem1}. 
\end{pf}

Lemma \ref{lem:equi} specifies conditions under which Problem \ref{prob:exact} can be equivalently reformulated by replacing the convergence condition on the force error with the corresponding condition on position or velocity error. Instead of requiring the displacement and its derivatives to vanish, we assume the derivatives are uniformly continuous. This choice ensures that the number of regulated outputs does not exceed the number of inputs, which in turn ensures the existence of a solution to the regulator equations. To finalize its reformulation within the output regulation structure, it remains to specify the exosystem \eqref{eq:exosystem}
and error \eqref{eq:error} equations.

For a more general derivation, we consider the case where $k_d\ne0$, and define the error equation \eqref{eq:error} as in \eqref{eq:ehat}, i.e., $e=\hat{e}$. 
Comments on the particularities of the solution when $k_d=0$ are given in Remark \ref{rem:kd0}.

Because \eqref{eq:aux} determines $\hat{x}$ independently of $x_e$, it is appropriate to define $\hat{x}$ as a state of the exosystem. In this formulation, writing the full dynamics \eqref{eq:aux} for $\hat{x}$ (with parameters $m_d,b_d,k_d$), and the interaction model $f_i$ inside the exosystem creates two problems: 1) it produces algebraic cancellations in the derivation of the control law that obfuscate the regulator design and 2) it introduces an explicit dependence of the exosystem on the plant state when $f_i$ depends on $x_e$.  Instead, and without loss of generality for our synthesis, we treat $\hat{x}$ and $f_i$ as bounded, sufficiently smooth signals and represent their time derivatives as the exosystem inputs. This formulation preserves plant–exosystem separation, and provides a general controller derivation that remains valid even for nonlinear models of the interaction force. The justification rests on the fact that convergence of $x_e$ to $\hat{x}$, combined with Assumption \ref{ass:OSP}, entails boundedness of both $\hat{x}$ and $f_i$. 

Finally, as the controller structure \eqref{eq:sf} assumes static state feedback, the definition of the exosystem states directly determines the feedback signals. Thus, to investigate which signals are needed for output regulation, we include $\hat{x}$, $f_i$ and its derivatives of order $a$ and $b$ in the exosystem state
\begin{equation}\label{eq:exo_state}
    w = \left[\begin{array}{c c c c c c c c}
    \hat{x} & \dot{\hat{x}} & \ldots & \hat{x}^{(a)} & f_i & \dot{f}_i & \ldots & f_i^{(b)} \end{array}\right]^\intercal
\end{equation}
and let the terms of order $a+1$ and $b+1$ be the inputs of the exosystem
\begin{equation}
    r=\left[\begin{array}{c c}
    \hat{x}^{(a+1)} & f_i^{(b+1)} \end{array}\right]^\intercal.
\end{equation}

With this definition, the model formulation is completed, enabling the specification of the remaining matrices: the matrix $E\in \mathbb{R}^{2 \times (a+b)}$ that completes the plant dynamics (Eq. \eqref{eq:stiff})
\begin{equation}\label{eq:st_sys2}
    E_{ij}=\begin{cases}
    -\frac{1}{m}, & \text{if $(i,j)=(a+1,2)$}.\\
    0, & \text{otherwise}.
  \end{cases} 
\end{equation}
where $(\cdot)_{ij}$ denotes the $(i,j)$-th entry of the corresponding matrix; the exosystem matrices $S\in \mathbb{R}^{(a+b) \times (a+b)}$ and $D\in \mathbb{R}^{(a+b) \times 2}$ (whose structure follows directly from $w$ including sequences of increasing-order derivatives among its components)
\begin{subequations}\label{eq:st_exo}
\begin{gather}
  S_{ij}=\begin{cases}
    1, & \text{if $j=i+1 \,\,\textrm{and} \,\, i\notin\{a,a+b\} $}.\\
    0, & \text{otherwise}.
  \end{cases} \\
  D_{ij}=\begin{cases}
    1, & \text{if $(i,j) \in\{(a,1),(a+b,2)\}$}.\\
    0, & \text{otherwise}.
  \end{cases}\label{eq:D} 
  \end{gather}
\end{subequations}
and the error matrices $C\in \mathbb{R}^{1 \times 2}$ and $F\in \mathbb{R}^{1 \times (a+b)}$ (Eq.~\eqref{eq:ehat})
\begin{subequations}\label{eq:st_error}
\begin{gather}
   C=\left[\begin{array}{c c}
    -1 & 0 \end{array}\right], \\
  F_{j}=\begin{cases}
    1, & \text{if $j=1$}.\\
    0, & \text{otherwise}.
  \end{cases} 
  \end{gather}
\end{subequations}

\begin{lem}\label{lem:reg_sol}
    Consider the system given in \eqref{eq:wsystem}, with the corresponding matrices specified in \eqref{eq:st_sys}, \eqref{eq:st_sys2}, \eqref{eq:st_exo} and \eqref{eq:st_error}. Assuming state feedback \eqref{eq:sf}, output regulation is achieved if and only if 
    \begin{equation}\label{eq:sol}    u=k_1\hat{e}+k_2\dot{\hat{e}}+f_i+m\ddot{\hat{x}},        
    \end{equation} 
    with $\hat{e}$ defined in \eqref{eq:ehat} and $k_1,k_2>0$.
\end{lem}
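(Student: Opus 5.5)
The plan is to apply Theorem~\ref{theor:exactreg} directly. We check its two conditions for the concrete matrices, show that they pin down $K$ and $L$ up to the stabilizing gains, and then rewrite $u=Kx+Lw$ in terms of physical signals. Assumption~\ref{ass:stab} holds because $(A,B)$ in \eqref{eq:st_sys} is a controllable double integrator. Assumption~\ref{ass:antiHur} holds because $S$ in \eqref{eq:st_exo} is nilpotent, so $\sigma(S)=\{0\}$. The index bookkeeping in $E$, $S$, $D$ and $F$ should be read with $w$ ordered as in \eqref{eq:exo_state}: $E$ acts on the $f_i$ component, and $F$ selects $\hat{x}$. We assume $a\ge 2$, so that $\ddot{\hat{x}}$ is available; otherwise the regulator equations below have no solution and no admissible controller exists.

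\textbf{Step 1: solve the regulator equations \eqref{eq:reg}.}
\begin{itemize}
\item The output equation $C\Pi+F=0$ forces the first row of $\Pi$ to be $F$, i.e. the row vector that selects $\hat{x}$ from $w$.
\item The first row of $\Pi S=A\Pi+B\Gamma+E$ gives the second row of $\Pi$ as $FS$, which selects $\dot{\hat{x}}$.
\item The second row then gives $\Gamma/m=FS^2+e_{f_i}/m$, i.e. $\Gamma w=m\ddot{\hat{x}}+f_i$.
\end{itemize}
The transmission zeros of $C(sI-A)^{-1}B=-1/(ms^2)$ are empty, so the matrix \eqref{eq:sylv_cond} is square and full rank at $\lambda=0$. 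By Remark~\ref{rem:uniqueness}, $\Pi$ and $\Gamma$ are therefore unique, and the exosystem gain must be $L=\Gamma-K\Pi$ as in \eqref{eq:exo_gain}.

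\textbf{Step 2: check condition~\ref{cond:2} and restrict $K$.}
Since $C=[-1\ 0]$, any $(A+BK)$-invariant subspace contained in $\ker C=\mathrm{span}\{e_2\}$ must be invariant under $A$. But $Ae_2=e_1\notin\ker C$, so $\mathcal{V}(A,B,C)=\{0\}$, and condition~\ref{cond:2} requires $\Pi D=0$. The columns of $D$ select the $\hat{x}^{(a)}$ and $f_i^{(b)}$ components, and $\Pi$ only involves $\hat{x}$ and $\dot{\hat{x}}$. Hence $\Pi D=0$ whenever $a\ge 2$. In that case \eqref{eq:xbarsys} reduces to $\dot{\bar{x}}=(A+BK)\bar{x}$ with no input, so the remaining requirement is only that $A+BK$ be Hurwitz. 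Write $K=-[k_1\ k_2]$. Then $A+BK$ has characteristic polynomial $s^2+(k_2/m)s+k_1/m$, which is Hurwitz if and only if $k_1,k_2>0$.

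\textbf{Step 3: assemble the control law.}
Substituting gives
\[u=K(x-\Pi w)+\Gamma w.\]
Here $x-\Pi w=[x_e-\hat{x}\ \ \dot{x}_e-\dot{\hat{x}}]^\intercal=-[\hat{e}\ \ \dot{\hat{e}}]^\intercal$ and $\Gamma w=m\ddot{\hat{x}}+f_i$, so
\[u=k_1\hat{e}+k_2\dot{\hat{e}}+f_i+m\ddot{\hat{x}},\]
which is \eqref{eq:sol}.

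The ``if'' direction follows from Theorem~\ref{theor:exactreg}. The ``only if'' direction follows from three facts: $\Pi$ and $\Gamma$ are unique, $L$ is forced to be $\Gamma-K\Pi$, and Hurwitz stability of $A+BK$ is necessary and equivalent to $k_1,k_2>0$.

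The main obstacle is this necessity and uniqueness argument. It depends on applying Remark~\ref{rem:uniqueness} correctly; that remark requires a minimal realization, and here $(A,C)$ is observable and $(A,B)$ is controllable, so it applies. It also depends on the derivative order $a$ being large enough that $\ddot{\hat{x}}$ is an exosystem state, which is what makes $\Pi D=0$ achievable.
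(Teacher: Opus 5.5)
Your proof follows the paper's: Remark~\ref{rem:uniqueness} gives a unique $(\Pi,\Gamma)$, $\Pi D=0$ leaves only Hurwitz stability of $A+BK$ (equivalently $k_1,k_2>0$), and $L=\Gamma-K\Pi$ yields \eqref{eq:sol}; your checks that $S$ is nilpotent and that $\mathcal{V}(A,B,C)=\{0\}$ make explicit what the paper leaves implicit. One minor slip in your aside: for $a<2$ the regulator equations are still solvable (then $\Gamma$ just selects $f_i$), and what actually fails is condition~\ref{cond:2}, since $\Pi D\neq 0$ while $\mathcal{V}(A,B,C)=\{0\}$, which is the reason you correctly give at the end.
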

\begin{pf}
    Since Theorem \ref{theor:exactreg} provides necessary and sufficient conditions for output regulation, the proof reduces to verifying conditions \ref{cond:1} and \ref{cond:2} of that theorem, together with the uniqueness and selection requirements for the exosystem gain given in Remark~\ref{rem:uniqueness}.
    
    To check condition \ref{cond:1}, note that it follows from \eqref{eq:st_sys} that $(A,B)$ is controllable, and thus Assumption~\ref{ass:stab} holds. In addition, since $(A,C)$ is observable and $C(Is-A)^{-1}B$ has no transmission zeros, Remark~\ref{rem:uniqueness} guarantees that the solution of \eqref{eq:reg} is unique. A direct calculation shows that the solution, $\Pi\in \mathbb{R}^{2 \times (a+b)}$ and $\Gamma\in \mathbb{R}^{1 \times (a+b)}$, is 
\begin{subequations}\label{eq:reg_sol}
\begin{gather}
  \Pi_{ij}=\begin{cases}
    1, & \text{if $i=j$.}\\
    0, & \text{otherwise}.
  \end{cases}\label{eq:Pi} \\
  \Gamma_{j}=\begin{cases}
    m, & \text{if $j=3 $}.\\
    1, & \text{if $j=a+1 $}.\\
    0, & \text{otherwise}.
  \end{cases} 
  \end{gather}
\end{subequations}
Condition \ref{cond:2} specifies the criterion for the solvability of the disturbance decoupling problem. However, since \eqref{eq:D} and \eqref{eq:Pi} imply $\Pi D=0$, there is no disturbance for the resulting system \eqref{eq:xbarsys}, and the disturbance decoupling problem reduces to finding 
\begin{equation}\label{eq:stiffK}
    K=-[k_1\,\,k_2]
\end{equation}
that renders $(A+BK)$ Hurwitz. One readily verifies that this is true for any $k_1>0$ and $k_2>0$.
Replacing \eqref{eq:reg_sol} and \eqref{eq:stiffK} into \eqref{eq:exo_gain} gives
\begin{equation}\label{eq:stiffL}
    L = \left[\begin{array}{c c c c c c}
    k_1 & k_2 & m & 0_{a-3} & 1 & 0_{b-1}   \\
    \end{array}\right],
\end{equation}
where $0_k$ denotes a zero vector of length $k$. Substituting \eqref{eq:stiffK} and \eqref{eq:stiffL} into \eqref{eq:sf} yields \eqref{eq:sol}. 
\end{pf}

\begin{rem}\label{rem:kd0}
    For $k_d=0$, it is appropriate to set the error in \eqref{eq:error} to $e=\dot{x}_e-\dot{\hat{x}}$, as established in Lemma~\ref{lem:equi}. By a similar argument, one can redefine the state in \eqref{eq:st_def} eliminating the displacement $x_e$ and keeping only $\dot{x}_e$. Similarly, the state $\hat{x}$ can also be eliminated from $w$ in \eqref{eq:exo_state}. With this procedure and doing the same calculations as in Lemma~\ref{lem:reg_sol}, the resulting control law is also \eqref{eq:sol}, but with $k_1=0$.
\end{rem}

Lemma~\ref{lem:reg_sol} shows that, for the stiff system, asymptotic tracking of a reference signal using static feedback on the measurements $x$ and $w$ is achieved if and only if the control law \eqref{eq:sol} is applied. As a consequence of the construction of the exosystem, this holds not only for signals produced by an auxiliary generator such as \eqref{eq:aux} but for arbitrary smooth reference trajectories. The particularity of a reference generated by \eqref{eq:aux} is that, \textit{a priori}, boundedness of $\hat{x}$ and $f_i$ is not guaranteed (e.g., for $f_i=-kx_e,$ with $k>k_d$). Assumption \ref{ass:OSP} provides sufficient condition for all variables in the control law \eqref{eq:sol} to remain bounded and prevent the described situation.

With those results established, we can now state the main result of this section.

\begin{thm}\label{theor:stiffsol}
    Consider the stiff system in \eqref{eq:stiff} subject to static state feedback \eqref{eq:sf}, where $x$ and $w$ are defined in \eqref{eq:st_def} and \eqref{eq:exo_state}, respectively. Let \eqref{eq:aux} describe the dynamics of $\hat{x}$, with the corresponding parameters chosen such that Assumption \ref{ass:OSP} holds. Under these conditions, the exact compliant control problem is solved if and only if the control law \eqref{eq:sol} is employed, with $k_1\ge0$ and $k_2>0$. 
\end{thm}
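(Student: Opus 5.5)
The plan is to combine Lemma~\ref{lem:equi}, which converts the force-error condition of Problem~\ref{prob:exact} into a position or velocity error condition, with Lemma~\ref{lem:reg_sol} and Remark~\ref{rem:kd0}, which characterize the state feedback achieving the corresponding output regulation. We split into the cases $k_d\neq0$ and $k_d=0$. In each case we prove sufficiency (the law \eqref{eq:sol} solves Problem~\ref{prob:exact}) and necessity (any static feedback \eqref{eq:sf} solving it must have this form). The sign conditions come out as $k_1>0$ when $k_d\neq0$, and $k_1=0$ with $k_2>0$ when $k_d=0$, where $\hat{x}$ is eliminated from the state. Together these give the stated range $k_1\ge0$, $k_2>0$.

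\emph{Sufficiency.} Substituting \eqref{eq:sol} into \eqref{eq:stiff} gives
\begin{equation*}
m\ddot{\hat{e}}+k_2\dot{\hat{e}}+k_1\hat{e}=0 .
\end{equation*}
This ODE is exponentially stable for $k_1,k_2>0$. For $k_1=0$, $k_2>0$, the velocity error $\dot{\hat e}$ decays exponentially. Hence $\hat e$, $\dot{\hat e}$, $\ddot{\hat e}$ are smooth and bounded, so $\dot{\hat e}$ and $\ddot{\hat e}$ are uniformly continuous and the hypothesis of Lemma~\ref{lem:equi} holds. That lemma then gives $f_r-f_i\to0$ in the appropriate case.

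For condition~\ref{cond:ex2} we must show that $x=[x_e\ \dot{x}_e]^\intercal$ is bounded. Since $x_e=\hat x-\hat e$, it suffices to bound $\hat x$ and $\dot{\hat x}$. Here we use Assumption~\ref{ass:OSP}: $\hat x$ obeys \eqref{eq:aux}, i.e.\ it is the output of the OSP system $H_1$ in feedback with the passive $H_2$. With the bounded reference $x_r$ as the exogenous input, the argument of Remark~\ref{rem:boundedxe} (a composite storage function plus \cite[Theorem~8.4]{khalil2002}) bounds $\hat x$ and $\dot{\hat x}$ for $k_d>0$.

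\emph{Main obstacle.} The delicate point is the coupling: $f_i$ is driven by $x_e=\hat x-\hat e$, not by $\hat x$. We would treat this as a feedback interconnection perturbed by the exponentially decaying signal $\hat e$ and use the finite-gain $\mathcal{L}_2$ stability of the passive/OSP loop \cite[Theorem~5.2]{brogliato2020} to absorb it. For $k_d=0$, $x_e$ itself may be unbounded (Remark~\ref{rem:boundedxe}). In that case we would argue only with the reduced velocity state of Remark~\ref{rem:kd0}, for which boundedness of $\dot x_e$ follows by the same argument.

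\emph{Necessity.} Suppose a static feedback \eqref{eq:sf} solves Problem~\ref{prob:exact}. Condition~\ref{cond:ex1}, together with Lemma~\ref{lem:equi}, forces $\hat e\to0$ (respectively $\dot{\hat e}\to0$). This convergence must hold for every smooth bounded $x_r$, so the admissible $\hat x$ and $f_i$ (together with their derivatives entering $r$) range over arbitrary smooth signals. Since the closed-loop state must remain bounded, $(A+BK)$ must be Hurwitz. This is exactly the output regulation problem of Theorem~\ref{theor:exactreg}. By Lemma~\ref{lem:reg_sol} and the uniqueness statement in Remark~\ref{rem:uniqueness}, the law must therefore be \eqref{eq:sol} with $k_1>0$, $k_2>0$, or with $k_1=0$ in the $k_d=0$ formulation of Remark~\ref{rem:kd0}.

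A second subtle step is justifying that "for every $x_r$" produces enough richness in $r$ to invoke the "only if" direction of Theorem~\ref{theor:exactreg}. We would argue that $x_r$ enters \eqref{eq:aux} with the invertible impedance operator, so $\hat x^{(a+1)}$ can be shaped to an arbitrary piecewise continuous signal.
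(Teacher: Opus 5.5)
\textbf{Your skeleton matches the paper's.} You use Lemma~\ref{lem:equi} to trade the force error for $\hat e$ (or $\dot{\hat e}$), the closed-loop equation $m\ddot{\hat e}+k_2\dot{\hat e}+k_1\hat e=0$ for sufficiency and uniform continuity, Lemma~\ref{lem:reg_sol} with Remark~\ref{rem:kd0} for the controller structure, and Assumption~\ref{ass:OSP} together with $x_e=\hat x-\hat e$ for condition~\ref{cond:ex2}. The sufficiency half is fine.

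\textbf{The gap is in necessity.} You try to obtain the ``every piecewise continuous $r$'' premise of Theorem~\ref{theor:exactreg} from the freedom in $x_r$. But $x_r$ only steers $\hat x$. The other exosystem input, $f_i^{(b+1)}$, is generated by the environment from $x_e$, so $f_i$ is never independent of the plant state. The gains on $f_i$ and on $x$ therefore cannot be pinned down separately. In fact, for a fixed environment the uniqueness claim is false. Take $f_i=k_ex_e$, the paper's own example, and the law $u=k_1\hat e+k_2\dot{\hat e}+m\ddot{\hat x}+k_ex_e$. It has the form \eqref{eq:sf}, with gain $-(k_1-k_e)$ on $x_e$ and zero gain on $f_i$. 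It produces exactly the same closed-loop trajectories as \eqref{eq:sol}, yet it is not \eqref{eq:sol}.

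The paper never argues richness. Its ``only if'' rests on the modelling convention, adopted before Lemma~\ref{lem:reg_sol}, that $\hat x$ and $f_i$ are arbitrary bounded smooth exogenous signals whose highest derivatives are the exosystem inputs. In effect, the law must work with $f_i$ merely measured, for every environment allowed by Assumption~\ref{ass:OSP}. You need to state that convention explicitly; within it, Theorem~\ref{theor:exactreg} and Remark~\ref{rem:uniqueness} apply directly.

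\textbf{Two smaller steps also fail as written.}

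First, boundedness of $x$ does not force $A+BK$ to be Hurwitz. The law \eqref{eq:sol} with $k_1=0$, $k_2>0$ keeps $x$ bounded whenever $\hat x,\dot{\hat x}$ are bounded, yet $A+BK$ then has an eigenvalue at $0$. What forces Hurwitz when $k_d\neq0$ is that $\hat e\to0$ must hold for every initial condition. For fixed $w$, the difference $\delta$ of two trajectories obeys $\dot\delta=(A+BK)\delta$ and contributes $C\delta$ to the error. Moreover, $(A+BK,C)$ is observable for every $K$, since $C(sI-A)^{-1}B=-1/(ms^2)$ has no zeros. Hence $C\delta\to0$ for all $\delta(0)$ forces $A+BK$ Hurwitz.

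Second, finite-gain $\mathcal{L}_2$ stability of the passive/OSP loop gives only $\mathcal{L}_2$-type bounds. Under a persistent bounded $x_r$, these do not bound $\hat x$ pointwise. The decaying perturbation $\dot{\hat e}$ at the input of $H_2$ has to be carried through the composite storage-function argument of Remark~\ref{rem:boundedxe} instead. The paper itself merely asserts this boundedness from Assumption~\ref{ass:OSP}.
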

\begin{pf}
    Equations \eqref{eq:st_def}, \eqref{eq:st_sys}, \eqref{eq:exo_state} and \eqref{eq:st_sys2} define the system in form \eqref{eq:system}, with $Ew$ denoting the influence of $f_i$ on the plant dynamics. Lemma~\ref{lem:equi} establishes that, provided $\dot{\hat{e}}$ and $\ddot{\hat{e}}$ are uniformly continuous, achieving zero force error is equivalent to achieving either zero position or velocity error, depending on whether $k_d$ is zero. Considering this equivalence, Lemma~\ref{lem:reg_sol} specifies the unique structure of the controller \eqref{eq:sol} under static state feedback that achieves output regulation. The design requires $k_1,k_2>0$, although, by Remark~\ref{rem:kd0}, the coefficient $k_1$ may be set to zero when $k_d=0$. Since the substitution of \eqref{eq:sol} in \eqref{eq:stiff} leads to an exponentially stable system, the uniform continuity hypothesis required by Lemma \ref{lem:equi} is satisfied. In Lemma~\ref{lem:reg_sol}, the proof presupposes that the input signals are bounded. Assumption \ref{ass:OSP} provides sufficient conditions to ensure that those variables are indeed bounded, thus verifying condition~\ref{cond:ex1} of Problem~\ref{prob:exact}. Since the interaction port motion variable coincides with the system state, condition~\ref{cond:ex2} is automatically fulfilled.
\end{pf}

Theorem \ref{theor:stiffsol} establishes the unique control law structure that solves the exact compliant control problem under the assumption of static state feedback of the signals specified in \eqref{eq:st_def} and \eqref{eq:exo_state}. Although control law \eqref{eq:sol} can also be obtained using Lyapunov-based design methods, the generalized output regulation framework allows proving its uniqueness. 

As a consequence of Theorem \ref{theor:stiffsol}, we can conclude that although the conventional PD controller can render a passive mapping from $f_i$ to $\dot{x}_e$ \cite{Keemink2018}, it is not capable of realizing the target relationship \eqref{eq:trad}, even with the inclusion of force compensation. In robotics terminology \cite{Calanca2016}, this means that impedance rendering with such a controller always produces an error (except in rare cases in which the term $f_i+m\ddot{\hat{x}}$ in eq.~\eqref{eq:sol} vanishes, as discussed in Section~\ref{sec:num}). 

Similar to the impedance control strategy with acceleration feedback \cite{Calanca2017}, the control law \eqref{eq:sol} ensures that the rendered impedance exactly matches the desired impedance for any desired parameters across the entire frequency spectrum. In practice, the inclusion of unmodeled effects (time delay, noise, friction, actuator dynamics) may compromise this ideal behavior, leading to deviations from the desired dynamics. 

The next section demonstrates how the control framework developed here can be extended to achieve exact compliant control in soft joints.

\subsection{Soft Joints}

The designation \textit{soft joint} applies to systems in which elastic or damping properties are deliberately introduced between the actuator and its load. Some benefits of this configuration include enhanced shock tolerance, improved force control robustness, and low-cost force measurement \cite{Calanca2017}. 

\begin{figure}[h!]
  \centering
  \includegraphics[scale = 0.8]{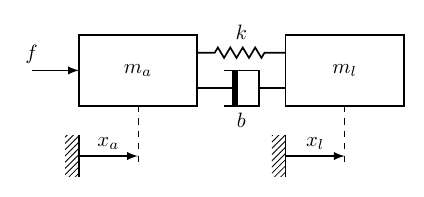}
\caption{Schematic representation of a soft joint.}\label{im:soft}
\end{figure}

A schematic representation of a soft joint is depicted in Figure~\ref{im:soft}. The subscript $a$ corresponds to actuator variables, while $l$ denotes load variables. The inertias representing these two elements are connected by a spring and a damper with known parameters $k$ and $b$. Although damping is often neglected in such configurations \cite{Calanca2016}, it is included here for generality. The interaction port in this system is characterized by the load motion variable $x_l$ and the viscoelastic force
\begin{equation}\label{eq:visc}
    f_i=k(x_a-x_l)+b(\dot{x}_a-\dot{x}_l).
\end{equation}
Accordingly, the reference model is \eqref{eq:ref_impedance}, with the error defined by 
\begin{equation}\label{eq:soft_imp_er}
e=x_r-x_l.    
\end{equation}

A strategy like the one used to derive \eqref{eq:aux} for the stiff system can also turn the force-tracking problem into a position-tracking one. However, in the soft-joint case, replacing the interaction displacement in \eqref{eq:trad} (where $e$ is defined in \eqref{eq:soft_imp_er}) with the reference generator $\hat{x}$ yields a reference for the load position, which leads to a non-collocated control problem \cite{Cannon1984,Cruzneto2019}.
Although this problem admits a solution, the resulting control law is considerably more complex. Following a procedure analogous to that of the previous section (proof of Lemma \ref{lem:reg_sol}), one can show that the resulting control law depends on the third derivative of the reference signal as well as on the inertia properties of the load, which are often unavailable in practice. 
To avoid the non-collocation problem for a soft joint represented only by a spring, \cite{Pratt2004} proposed an alternative scheme that results in a collocated control problem. However, as noted in \cite{Calanca2017}, this scheme only approximates the desired behavior and does not fully realize the exact target impedance.

Instead, we can have a collocated problem that renders the target impedance exactly by leveraging the known soft joint parameters $k$ and $b$. We construct a reference for the actuator displacement by substituting the interaction force model \eqref{eq:visc} into $f_i=f_r$ and replacing the actuator displacement with the reference generator $\hat{x}$

\begin{equation}\label{eq:soft_gen}
    k(\hat{x}-x_l)+b(\dot{\hat{x}}-\dot{x}_l)=m_d(\ddot{x}_r-\ddot{x}_l)+b_d(\dot{x}_r-\dot{x}_l)+k_d(x_r-x_l).
\end{equation}

Differently from traditional admittance implementations, \eqref{eq:soft_gen} generates the reference position using only displacement signals and their derivatives as inputs, rather than force inputs. For a general impedance model, implementing \eqref{eq:soft_gen} requires access to the load acceleration, which is often sensitive to noise. In contrast, for reduced models such as the Voigt model ($m_d=0$), this issue can be avoided. The advantage of implementing \eqref{eq:soft_gen} to generate the reference trajectory is that it enables the direct use of the control design developed in the previous section for the stiff joint. This is possible because the actuator dynamics remain unchanged, as the actuator inertia is still directly driven by the control ($f$) and the interaction ($f_i$) forces, the latter being known \eqref{eq:visc}. 

However, to guarantee exact compliant control, it is necessary to perform an additional step to verify condition \ref{cond:ex2} of Problem \ref{prob:exact}. This condition guarantees the boundedness of the full state, which is necessary because the viscoelastic element between the actuator and the load increases the order of the system. To this end, consider that the actuator and load dynamics are represented by

\begin{subequations}\label{eq:soft_dyn}
\begin{gather}
  m_a\ddot{x}_a=f-f_i, \\
  m_l\ddot{x}_l=f_i,\label{eq:load_dyn}
  \end{gather}
\end{subequations}
with $f_i$ given by \eqref{eq:visc}.
To perform a stability analysis, the state of the complete system is more conveniently defined as
\begin{equation}\label{eq:scomplete}
    x=\left[\begin{array}{c c c c c}
    \hat{e} & \dot{\hat{e}} & x_l & \dot{x}_l & \hat{x}
    \end{array}\right]^\intercal.
\end{equation}
where 
\begin{equation}\label{eq:soft_er}
\hat{e}=\hat{x}-x_a
\end{equation}
Because the control law \eqref{eq:sol} depends on $\ddot{\hat{x}}$, generating $\hat{x}$ via \eqref{eq:soft_gen} would yield a dependence on the third derivative of the load displacement in the control input. To avoid introducing higher-order derivatives of the load motion, the state $\hat{x}$ in \eqref{eq:scomplete} is defined according to \eqref{eq:soft_gen} with $m_d=0$. 

\begin{prop}\label{prop:bounded_state}
    Consider the soft system in \eqref{eq:soft_dyn}, where $f$ is given by the control law \eqref{eq:sol}, with $\hat{e}$ given by \eqref{eq:soft_er}, and $\hat{x}$ generated according to \eqref{eq:soft_gen} with $m_d=0$. Then, for all initial conditions and any sufficiently smooth and bounded signal $x_r(t)$, the state of the complete system $x(t)$ in \eqref{eq:scomplete} remains bounded for all $t\ge0$. 
\end{prop}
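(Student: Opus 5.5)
The plan is to write the closed loop in a cascade form. Three pieces can then be treated in sequence: the actuator tracking error $\hat{e}$, the load motion $x_l$, and the reference $\hat{x}$. Each is a stable linear system driven by bounded (or exponentially decaying) inputs produced by the previous one. Here $m$ in \eqref{eq:sol} is $m_a$. Substituting the control law \eqref{eq:sol} into the actuator equation of \eqref{eq:soft_dyn} cancels $f_i$, leaving $m_a\ddot{x}_a=k_1\hat{e}+k_2\dot{\hat{e}}+m_a\ddot{\hat{x}}$. With \eqref{eq:soft_er} this gives the autonomous error dynamics
\begin{equation*}
m_a\ddot{\hat{e}}+k_2\dot{\hat{e}}+k_1\hat{e}=0 .
\end{equation*}
This is exponentially stable for $k_1,k_2>0$, as in Lemma~\ref{lem:reg_sol}. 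If $k_1=0$ (Remark~\ref{rem:kd0}), $\dot{\hat{e}}$ decays exponentially and $\hat{e}$ converges to a constant, so it is bounded. In either case $\hat{e}$ and $\dot{\hat{e}}$ are bounded, and $\dot{\hat{e}}$ (and $\hat{e}$ when $k_1>0$) is integrable on $[0,\infty)$.

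Next I will rewrite the load dynamics \eqref{eq:load_dyn}. Using \eqref{eq:visc} and $x_a=\hat{x}-\hat{e}$, we get $f_i=k(\hat{x}-x_l)+b(\dot{\hat{x}}-\dot{x}_l)-k\hat{e}-b\dot{\hat{e}}$. The generator \eqref{eq:soft_gen} with $m_d=0$ replaces the first two terms by $b_d(\dot{x}_r-\dot{x}_l)+k_d(x_r-x_l)$. This yields
\begin{equation*}
m_l\ddot{x}_l+b_d\dot{x}_l+k_dx_l=b_d\dot{x}_r+k_dx_r-k\hat{e}-b\dot{\hat{e}} .
\end{equation*}
By Remark~\ref{rem:boundedxe}, $b_d>0$ and $k_d\ge0$. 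For $k_d>0$ this is a Hurwitz second-order system driven by a bounded input (since $x_r$ is smooth and bounded), so $x_l$ and $\dot{x}_l$ are bounded. The case $k_d=0$ is the delicate one, because Remark~\ref{rem:boundedxe} warns that $x_l$ may drift. There I would integrate once:
\begin{equation*}
m_l\dot{x}_l+b_dx_l=b_dx_r-k\int_0^t\hat{e}\,d\tau-b\hat{e}+c .
\end{equation*}
This is a stable first-order system, so boundedness of $x_l$ follows if $\int_0^t k\hat{e}\,d\tau$ is bounded. That holds when $\hat{e}$ decays exponentially. The sub-case $k_1=0$ must be checked separately, where $\hat{e}$ tends to a possibly nonzero constant. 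I expect this to be the main obstacle. There I would exploit that Remark~\ref{rem:kd0} only regulates velocities. I would then either argue that the state $\hat{e}$ is dropped from \eqref{eq:scomplete}, or require $k_1>0$ whenever $k\neq0$.

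Finally, $\hat{x}$ is generated by $b\dot{\hat{x}}+k\hat{x}=g(t)$, where $g:=k x_l+b\dot{x}_l+b_d(\dot{x}_r-\dot{x}_l)+k_d(x_r-x_l)$ is now known to be bounded. If $k>0$ and $b>0$, this is a stable first-order filter and $\hat{x}$ is bounded. If $b=0$, then $\hat{x}=g/k$ is bounded directly. If $k=0$ and $b>0$ (pure damper), I would avoid integrating $g$ naively. Instead I would use the load equation, which then reads $b(\dot{\hat{x}}-\dot{x}_l)=m_l\ddot{x}_l+b\dot{\hat{e}}$. This integrates to $\hat{x}-x_l=(m_l\dot{x}_l+b\hat{e})/b+\text{const}$, which is bounded. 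Collecting these bounds shows that every component of \eqref{eq:scomplete} is bounded for all $t\ge0$ and all initial conditions.
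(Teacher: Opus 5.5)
Your proof is correct and, in the generic case, it is the paper's own argument written as a cascade: the paper stacks exactly your three blocks into one block-lower-triangular matrix $A_{cl}$, whose eigenvalues are the roots of $m_a s^2+k_2 s+k_1$, $m_l s^2+b_d s+k_d$ and $-k/b$; it declares $A_{cl}$ Hurwitz by assuming all parameters positive and concludes boundedness for the input $[x_r\;\dot{x}_r]^\intercal$ (your coupling $-k\hat{e}-b\dot{\hat{e}}$ has the right sign, and the plus signs in the fourth row of the paper's $A_{cl}$ are a harmless slip that leaves the spectrum unchanged). Your separate handling of $k_d=0$, $k=0$ and $b=0$ goes beyond the paper, whose $A_{cl}$ then has a zero eigenvalue or is undefined, and your diagnosis of the sub-case $k_1=k_d=0$, $k\neq0$ is right, but only your second remedy works: there $x_l$ and $\hat{x}$ themselves grow linearly whenever $\lim_{t\to\infty}\hat{e}\neq0$, so dropping $\hat{e}$ from the state would not restore boundedness, whereas requiring $k_1>0$ matches the paper's standing assumption.
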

\begin{pf}
    See Appendix \ref{app:prop1}.
\end{pf}

As a consequence of Proposition \ref{prop:bounded_state} and Theorem \ref{theor:stiffsol}, we have the following corollary.

\begin{cor}\label{cor}
    Consider the soft system \eqref{eq:soft_dyn} and the reference impedance model \eqref{eq:ref_impedance}, with $M_d=0$ and $e$ given by \eqref{eq:soft_imp_er}. Under the control law \eqref{eq:sol} with $k_1,k_2>0$, using $\hat{e}$ from \eqref{eq:soft_er} and $\hat{x}$ generated by \eqref{eq:soft_gen}, the exact compliant control problem is solved. 
\end{cor}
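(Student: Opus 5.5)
The corollary follows by checking the two conditions of Problem~\ref{prob:exact} for the soft system. Condition~\ref{cond:ex2} (boundedness of the full state) is exactly the content of Proposition~\ref{prop:bounded_state}. Condition~\ref{cond:ex1} (asymptotic force matching) I would obtain by reducing the actuator loop to the stiff-joint setting of Lemma~\ref{lem:reg_sol}, and then mapping actuator-position convergence back to the impedance relation through the generator \eqref{eq:soft_gen}.

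\textbf{Step 1: reduction to the stiff case.} The actuator equation in \eqref{eq:soft_dyn}, $m_a\ddot x_a=f-f_i$, has the same form as \eqref{eq:stiff}, with $x_e$ replaced by $x_a$ and $m$ by $m_a$. The interaction force $f_i$ is available from \eqref{eq:visc}. Substituting the control law \eqref{eq:sol}, with $\hat e=\hat x-x_a$, gives
\[
m_a\ddot{\hat e}+k_2\dot{\hat e}+k_1\hat e=0 .
\]
The terms $f_i$ and $m_a\ddot{\hat x}$ cancel identically, so this holds regardless of the load dynamics. For $k_1,k_2>0$ this is exponentially stable, so $\hat e\to0$, $\dot{\hat e}\to0$ and $\ddot{\hat e}\to 0$.

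\textbf{Step 2: well-posedness of the signals.} To implement \eqref{eq:sol} we need $\hat x$, $\dot{\hat x}$ and $\ddot{\hat x}$. With $m_d=0$, equation \eqref{eq:soft_gen} is a first-order filter in $\hat x$, driven by $x_l$, $\dot x_l$, $x_r$ and $\dot x_r$. Differentiating it once expresses $\ddot{\hat x}$ through $\ddot x_l=f_i/m_l$, which is known. All these signals are bounded by Proposition~\ref{prop:bounded_state} together with the assumed smoothness and boundedness of $x_r$.

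\textbf{Step 3: force matching.} By construction, \eqref{eq:soft_gen} states
\[
k(\hat x-x_l)+b(\dot{\hat x}-\dot x_l)=f_r ,
\]
where $f_r$ is given by \eqref{eq:ref_impedance} with $M_d=0$ and $e=x_r-x_l$. Subtracting \eqref{eq:visc} yields
\[
f_r-f_i=k\hat e+b\dot{\hat e}.
\]
The right-hand side tends to zero by Step~1, so condition~\ref{cond:ex1} holds.

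This direct argument bypasses Lemma~\ref{lem:equi}: the identity $f_r-f_i=k\hat e+b\dot{\hat e}$ holds exactly because \eqref{eq:soft_gen} uses the actuator-side force model rather than an admittance filter. Theorem~\ref{theor:stiffsol} is then needed only to note that \eqref{eq:sol} is the regulator achieving $\hat e\to 0$ for the actuator loop.

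\textbf{Main obstacle.} I expect the main difficulty to be making sure the closed loop is not unstable through the load/generator subsystem, since $\hat e\to0$ alone does not bound $x_l$ or $\hat x$. This is what Proposition~\ref{prop:bounded_state} supplies. If I had to argue it directly, I would first substitute $f_i=f_r-(k\hat e+b\dot{\hat e})$ into \eqref{eq:load_dyn}. This gives a load driven by the target impedance, $m_l\ddot x_l=b_d(\dot x_r-\dot x_l)+k_d(x_r-x_l)-(k\hat e+b\dot{\hat e})$, which is stable for $b_d>0$, $k_d>0$ (Assumption~\ref{ass:OSP}) and has a decaying input. The generator \eqref{eq:soft_gen} is a stable first-order filter (since $k,b>0$) with bounded inputs, so $\hat x$ is bounded as well.
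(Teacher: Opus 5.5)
Your proof is correct and follows the paper's route. Proposition~\ref{prop:bounded_state} supplies condition~\ref{cond:ex2}, the actuator loop is handled exactly as the stiff joint to get $\hat e\to0$, and your identity $f_r-f_i=k\hat e+b\dot{\hat e}$ makes explicit the step the paper leaves implicit when it cites Theorem~\ref{theor:stiffsol}, whose Lemma~\ref{lem:equi} was stated for the generator \eqref{eq:aux}, not \eqref{eq:soft_gen}.

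Two minor caveats: Assumption~\ref{ass:OSP} gives only $k_d\ge0$, so your direct boundedness sketch tacitly needs $k_d>0$, as does the paper's eigenvalue argument, which assumes all parameters positive; and $\ddot x_l$ should be taken as a measured signal rather than computed as $f_i/m_l$ if the controller is to stay free of load parameters.
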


Corollary \ref{cor} establishes that generating the reference trajectory using \eqref{eq:soft_gen} is sufficient to solve the exact compliant control problem with a control law that does not rely on the load parameters. By a short additional argument, using the ideas in the proof of Proposition \ref{prop:bounded_state}, the same conclusion holds when the load is subject to additional bounded forces. Note that if the dynamics of the viscoelastic element are of lower order than those of the impedance model, the controller may require higher-order derivatives of the load displacement (i.e., derivatives beyond the load acceleration).

\section{Numerical Example}\label{sec:num}

This section illustrates, through a numerical example, the differences in performance between the control law \eqref{eq:sol}, hereafter referred to as ECC (exact compliant controller), and a commonly employed PD controller.

We first consider the stiff system \eqref{eq:stiff} with the environment modeled as a linear spring
\begin{equation}\label{eq:rigidenv}
    f_i=k_ex_e.
\end{equation}
A periodic function is adopted for the reference signal to evaluate performance across varying excitation frequencies: 
\begin{equation}
    x_r=A\textrm{sin}(2\pi f t).
\end{equation}
Table~\ref{tab:param} summarizes the joint parameters, target impedance, control gains, and the reference amplitude $A$ used in the simulations. The gains $k_1$ and $k_2$ are the same for both controllers. Choosing high PD gains is a common strategy in admittance control \cite{Ott2015}.

\begin{table}[h!]
\caption{Simulation parameters for the stiff system.}
\centering
\begin{adjustbox}{width=\columnwidth}
\begin{tabular}{c | c c c c c c c}
\hline
Parameter & $m$ & $m_d$ & $b_d$ & $k_d$ & $A$ & $k_1$ & $k_2$ \\ \hline
Value & 2 & 1 & 10 & 100 & 0.2 & $10^4$ & $10^3$ \\ \hline
Unit & kg & kg & Ns/m & N/m & m & N/m & Ns/m \\
\hline
\end{tabular}
\end{adjustbox}
\label{tab:param}
\end{table}

Target impedance rendering performance can be analyzed using several different metrics \cite{Santos2025,Zanette2025}. Here, the root mean square error (RMSE) between $f_r$ and $f_i$ in steady-state is adopted. This index was evaluated for reference signal excitation frequencies $f\in \left[0.1,\,10\right]$ Hz, and environmental stiffness $k_e\in \left[1, \,9500\right]$ N/m. The upper bound on $k_e$ was chosen because the PD controller becomes unstable for larger stiffness values.

Two analyses were performed: (i) a nominal system with known parameters, and (ii) an uncertain system including additive Gaussian noise (zero mean and standard deviation of 0.1) in the interaction force measurement and Stribeck friction. The results of both analyses are presented as color-map plots in Fig.~\ref{im:stiffres}.

\begin{figure}[t]
    \centering
    \begin{subfigure}{0.49\linewidth}
        \centering
        \includegraphics[width=\linewidth]{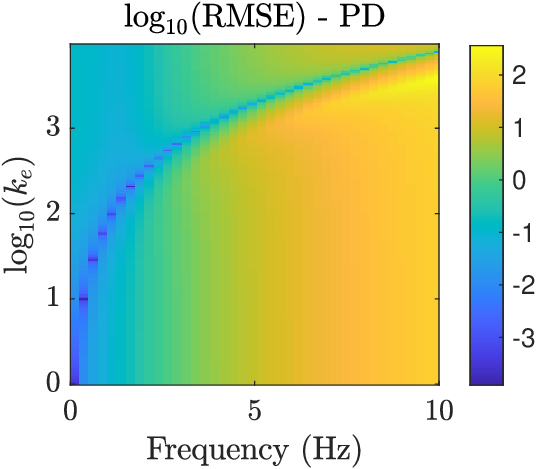}
        \caption{}\label{im:stiff_nom_pd}
    \end{subfigure}
    \hfill
    \begin{subfigure}{0.49\linewidth}
        \centering
        \includegraphics[width=\linewidth]{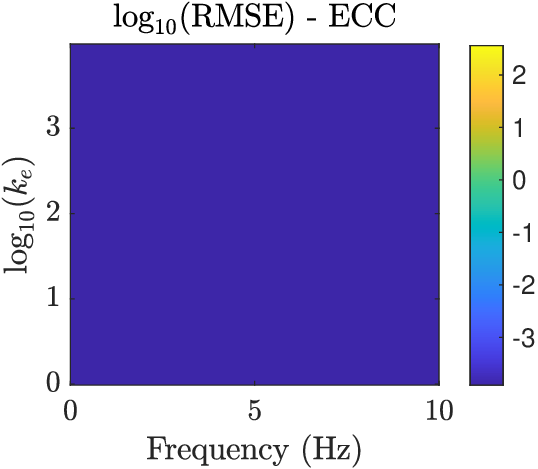}
        \caption{}\label{im:stiff_nom_ecc}
    \end{subfigure}
    \medskip
    \begin{subfigure}{0.49\linewidth}
        \centering
        \includegraphics[width=\linewidth]{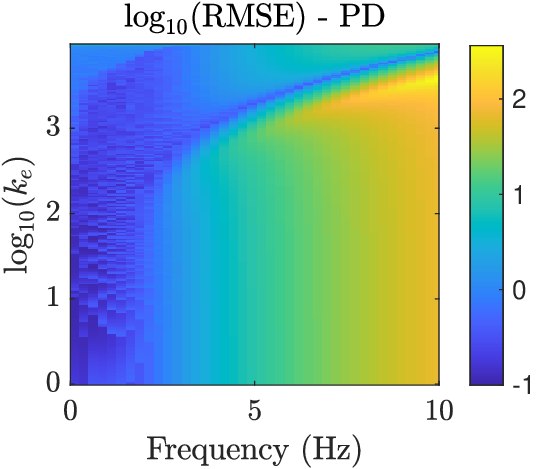}
        \caption{}\label{im:stiff_unc_pd}
    \end{subfigure}
    \hfill
    \begin{subfigure}{0.49\linewidth}
        \centering
        \includegraphics[width=\linewidth]{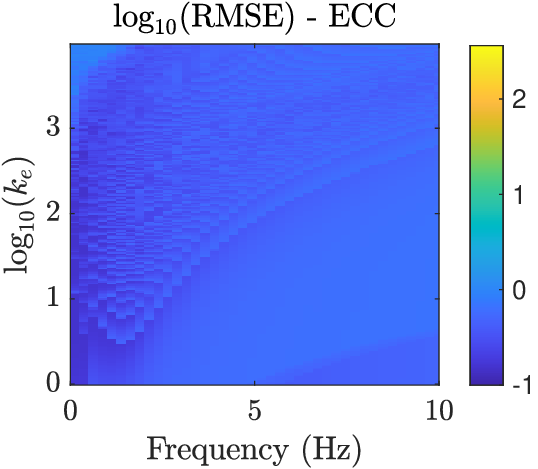}
        \caption{}\label{im:stiff_unc_ecc}
    \end{subfigure}
    \caption{RMSE of $(f_r-f_i)$ for the stiff joint: (a), (b) nominal model; (c), (d) Stribeck friction and measurement noise.}
    \label{im:stiffres}
\end{figure}

Figures \ref{im:stiff_nom_pd} and \ref{im:stiff_nom_ecc} show the results for the nominal system. As a numerical confirmation of Theorem~\ref{theor:stiffsol}, the ECC controller yields zero tracking error across all excitation frequencies and environmental stiffnesses (residual errors due to numerical integration are on the order of $10^{-7}$). For the PD controller, performance generally degrades as both the excitation frequency and the environmental stiffness increase. An exception appears along a curve where the RMSE vanishes, corresponding to the frequency $\omega=\sqrt{k_e/m}$ (only approximately visible due to the finite mesh resolution). At this frequency, the transfer functions from $x_r$ and its derivatives to $\hat{e}$ exhibit a zero (see Appendix \ref{app:trans}), explaining the observed null error. Since \eqref{eq:sol} is the unique control law that attains exact compliant control, the term $f_i + m\ddot{\hat{x}}$
in \eqref{eq:sol} must vanish at the frequency where the RMSE is zero; consequently, the ECC and PD controllers coincide at that frequency. This behavior is illustrated in Fig.~\ref{im:mx_fi_0}. The left subplot \ref{im:elip_mxfi} shows the system trajectory projected onto the $e\times F$ plane, yielding the characteristic elliptical plot and indicating convergence of $f_i$ to $f_r$. The right subplot \ref{im:0mxfi} shows the control term $f_i + m\ddot{\hat{x}}$ converging to zero, confirming the equivalence between ECC and PD controllers at this particular frequency.

\begin{figure}[t]
    \centering
    \begin{subfigure}{0.49\linewidth}
        \centering
        \includegraphics[width=\linewidth]{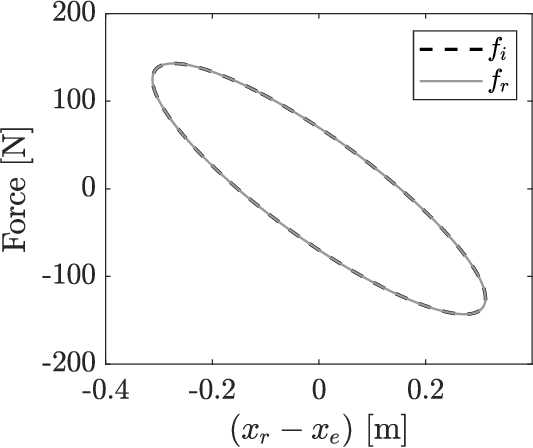}
        \caption{}\label{im:elip_mxfi}
    \end{subfigure}
    \hfill
    \begin{subfigure}{0.49\linewidth}
        \centering
        \includegraphics[width=\linewidth]{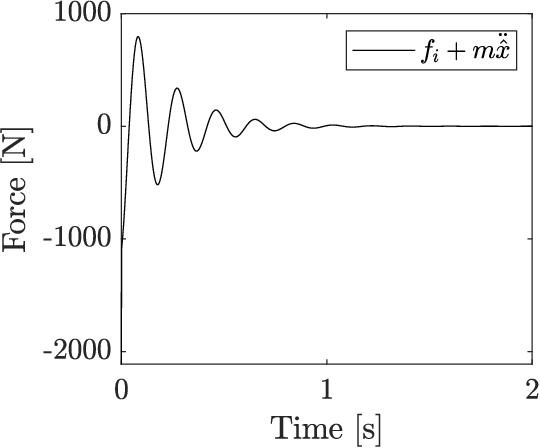}
        \caption{}\label{im:0mxfi}
    \end{subfigure}
    \caption{Results for the ECC controller considering $f=3.56$ Hz and $k_e=1000$ N/m ($\omega=\sqrt{k_e/m})$. \ref{im:elip_mxfi} shows the projection of the system trajectory on the $e\times F$ plane while \ref{im:0mxfi} shows the control term $f_i + m\ddot{\hat{x}}$.}
    \label{im:mx_fi_0}
\end{figure}

Figures \ref{im:stiff_unc_pd} and \ref{im:stiff_unc_ecc} present the results with Stribeck friction and force measurement noise. Although the ECC controller no longer attains zero tracking error, it still outperforms the PD controller for most combinations of environmental stiffness $k_e$ and excitation frequency $f$.

An analogous analysis was performed for the soft joint \ref{im:soft}. For a proper comparison with the results of the stiff system, the parameters were defined similarly (Table~\ref{tab:paramsoft}), and an additional external force, modeling a linear spring $f_{\mathrm{ext}} = k_e x_l$, was applied to the load. 

\begin{table}[h!]
\caption{Simulation parameters for the soft system.}
\centering
\begin{adjustbox}{width=\columnwidth}
\begin{tabular}{c | c c c c c c c c c}
\hline
Parameter & $m_a$ & $m_l$ & $b_d$ & $k_d$ & $A$ & $k_1$ & $k_2$ & $k$ & $b$\\ \hline
Value & 1 & 1 & 10 & 100 & 0.2 & $10^4$ & $10^3$ & $5000$ & $500$\\ \hline
Unit & kg & kg & Ns/m & N/m & m & N/m & Ns/m & N/m & Ns/m\\
\hline
\end{tabular}
\end{adjustbox}
\label{tab:paramsoft}
\end{table}

\begin{figure}[h!]
    \centering
    \begin{subfigure}{0.49\linewidth}
        \centering
        \includegraphics[width=\linewidth]{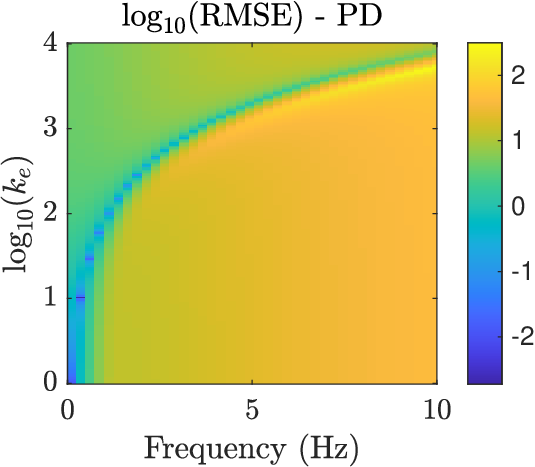}
        \caption{}
    \end{subfigure}
    \hfill
    \begin{subfigure}{0.49\linewidth}
        \centering
        \includegraphics[width=\linewidth]{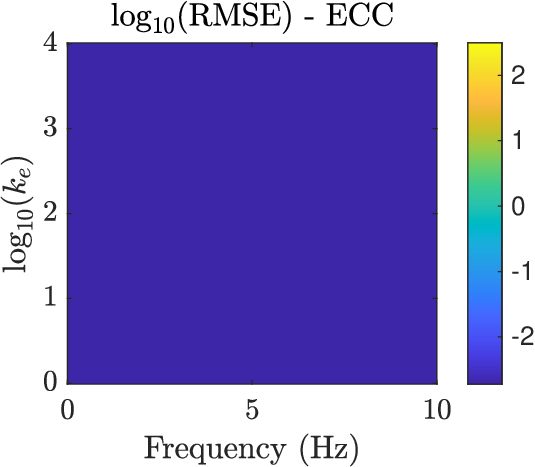}
        \caption{}
    \end{subfigure}
    \medskip
    \begin{subfigure}{0.49\linewidth}
        \centering
        \includegraphics[width=\linewidth]{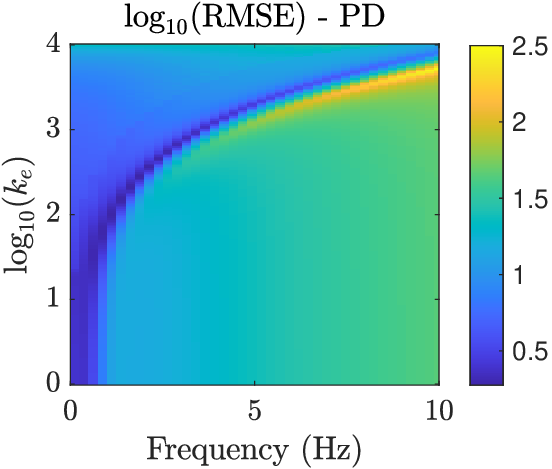}
        \caption{}
    \end{subfigure}
    \hfill
    \begin{subfigure}{0.49\linewidth}
        \centering
        \includegraphics[width=\linewidth]{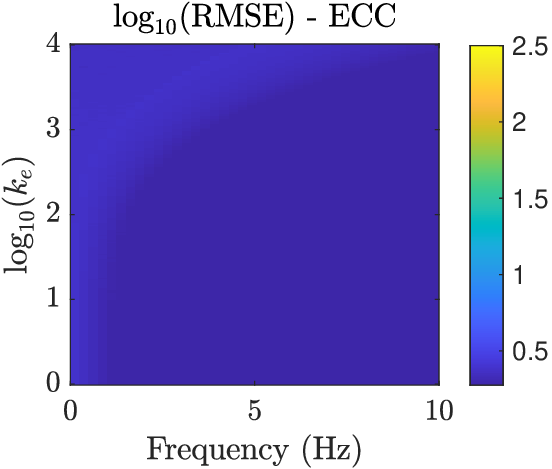}
        \caption{}
    \end{subfigure}
    \caption{RMSE of $(f_r-f_i)$ for the soft joint: (a), (b) nominal model; (c), (d) Stribeck friction and measurement noise.}
    \label{im:softres}
\end{figure}

Figure \ref{im:softres} presents the results for the soft system over varying excitation frequencies and environmental stiffnesses, considering both the nominal case and the case with measurement noise and Stribeck friction. Unlike the stiff joint case, the PD controller tolerated larger values of $k_e$ without becoming unstable, allowing the stiffness upper bound to be extended to $10^4$ N/m.

In general, the soft joint behavior is qualitatively similar to that of the stiff joint case. The PD controller again exhibits a zero-RMSE locus given by $\omega=\sqrt{k_e/m}$. Performance differences are more pronounced on either side of this locus: for $(k_e,f)$ pairs above the curve, the RMSE is generally lower than for pairs below it. Immediately below the locus, the RMSE attains a local maximum, showing that the worst performing parameter values lie in close proximity to the best, indicating a sensitivity that may be problematic in practice. 

The ECC maintains satisfactory performance in both nominal and uncertain scenarios. However, a slight performance degradation is observed at low interaction forces (typically occurring at lower frequencies) due to the increased influence of the Stribeck static friction component. To illustrate these results in terms of more conventional metrics, Fig.~\ref{im:soft_ellip} presents the projection of the system trajectory on the $e\times F$ plane for selected values of environmental stiffness and excitation frequency.

\begin{figure}[h!]
    \centering
    \begin{subfigure}{0.49\linewidth}
        \centering
        \includegraphics[width=\linewidth]{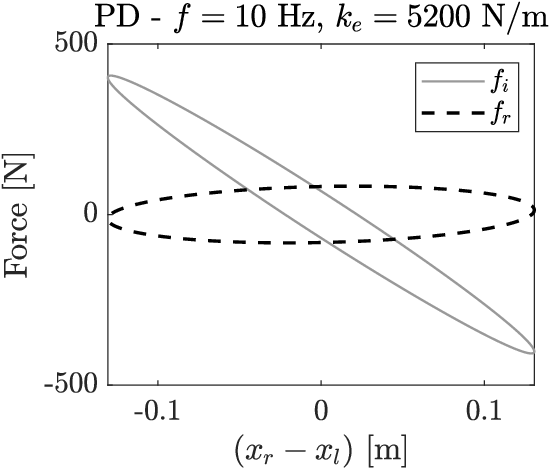}
        \caption{}
    \end{subfigure}
    \hfill
    \begin{subfigure}{0.49\linewidth}
        \centering
        \includegraphics[width=\linewidth]{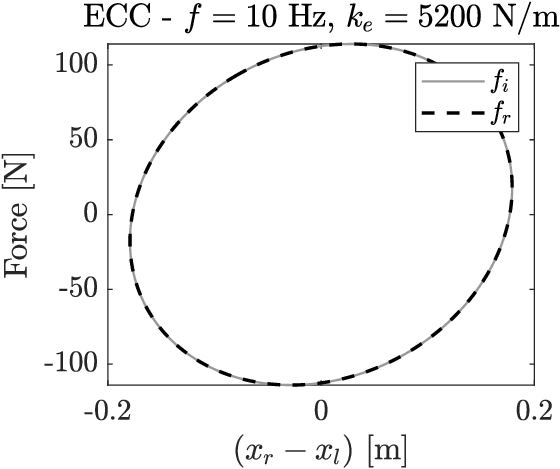}
        \caption{}
    \end{subfigure}
    \medskip
    \begin{subfigure}{0.49\linewidth}
        \centering
        \includegraphics[width=\linewidth]{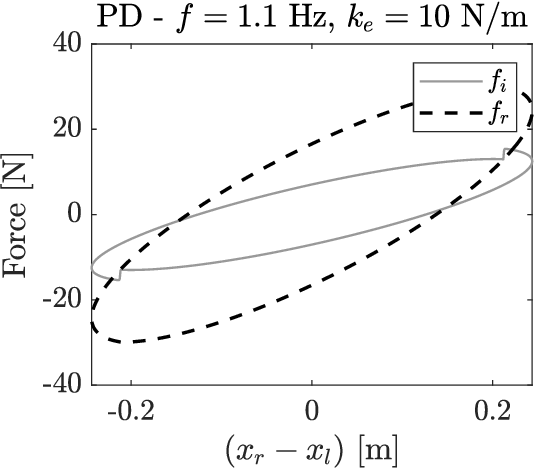}
        \caption{}
    \end{subfigure}
    \hfill
    \begin{subfigure}{0.49\linewidth}
        \centering
        \includegraphics[width=\linewidth]{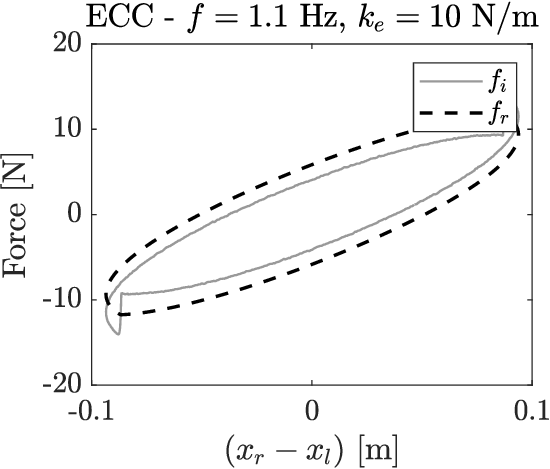}
        \caption{}
    \end{subfigure}
    \caption{Projection of the system trajectory on the $e\times F$ plane: (a), (b) nominal model; (c), (d) Stribeck friction and measurement noise.}
    \label{im:soft_ellip}
\end{figure}

\section{Conclusion}

This paper addressed the problem of achieving a desired impedance in compliant control problems using the admittance structure and linear positional controllers. By establishing a rigorous equivalence with generalized output-regulation theory, we showed that under those assumptions (admittance control and linear position control), there exists a unique control law that attains exact compliant control. Simulations demonstrate that the extra terms, relative to a conventional PD controller,  improve performance in both the ideal case and in the presence of uncertainties. Future work will focus on experimental implementation, extension to multi-degree-of-freedom systems, and the assessment of integrating adaptive mechanisms into the derived control law.

\begin{ack}

\end{ack}

\bibliographystyle{plain}        
\bibliography{references}           



\appendix
\section{Proof of Lemma \ref{lem:equi}}\label{app:lem1}

Subtracting equations \eqref{eq:ref_impedance} and \eqref{eq:aux} leads to
\begin{equation}
    w:=f_r-f_i=m_d\ddot{\hat{e}}+b_d\dot{\hat{e}}+k_d\hat{e}.
\end{equation}
Additionally, Assumption \ref{ass:OSP} implies $m_d\ge0$, $k_d\ge0$ and $b_d>0$. 

\textbf{Case 1: $k_d\neq0$.} If either $m_d=0$ or $m_d\ne0$, the polynomial
$p(s)=m_d s^2 + b_d s + k_d$ is Hurwitz. Therefore, bounded $w$ with $w(t)\to0$ yields $\hat e(t)\to0$. The converse follows from the uniform continuity of $\dot{\hat{e}}$ and $\ddot{\hat{e}}$. Under this hypothesis, Barbalat's Lemma \cite{khalil2002} ensures that $\hat{e}\to0$ implies $\dot{\hat{e}}\to0$ and $\ddot{\hat{e}}\to0$. The convergence of $w(t)$ to zero then follows directly. 

\textbf{Case 2: $k_d=0$.} The relation reduces to
\begin{equation}
w=m_d\ddot{\hat{e}}+b_d\dot{\hat{e}}.
\end{equation}
Let $y:=\dot{\hat{e}}$. Then
\begin{equation}
w=m_d\dot{y} + b_dy.
\end{equation}
If $m_d>0$, the same argument presented in Case 1 shows that bounded $w$ with $w(t)\to0$ implies $\dot{\hat{e}}\to0$. If $m_d=0$, then one obtains the algebraic relation $b_dy = w$. That also implies bounded $w$ with $w(t)\to0$ yields $y=\dot{\hat{e}}\to0$. The converse follows from the same argument of Case 1.

\section{Proof of Proposition \ref{prop:bounded_state}}\label{app:prop1}
Under the control law \eqref{eq:sol}, the dynamics of the actuator are

\begin{equation}\label{eq:cl_act}
    m_a\ddot{\hat{e}}+k_2\dot{\hat{e}}+k_1\hat{e}=0.
\end{equation}

Assuming $x_r(t)=0$, the combination of \eqref{eq:cl_act}, \eqref{eq:load_dyn} and \eqref{eq:soft_gen} with $m_d=0$ leads to the linear dynamics

\begin{equation}\label{eq:cl_lin}
    \dot{x}=A_{cl}x
\end{equation}
with $x$ given by \eqref{eq:scomplete} and $A_{cl}$ given by
\begin{equation}
    A_{cl}= \begin{bmatrix}
        0 & 1 & 0 & 0 & 0 \\
        -k_1/m_a & -k_2/m_a & 0 & 0 & 0 \\
        0 & 0 & 0 & 1 & 0 \\
        k/m_l & b/m_l & -k_d/m_l & -b_d/m_l & 0\\
        0 & 0 & (k-k_d)/b & (b-b_d)/b & -k/b\\
    \end{bmatrix}.
\end{equation}

The eigenvalues of $A_{cl}$ are

\begin{equation}\label{eq:eig}
\begin{aligned}
& \lambda_{1,2}=\frac{-b_d\pm\sqrt{b_d^2-4k_dm_l}}{2m_l} \\
& \lambda_{3,4}=\frac{-k_2\pm\sqrt{k_2^2-4k_1m_a}}{2m_a} \\
& \lambda_{5}=-\frac{k}{b}.
\end{aligned}
\end{equation}

Given that all parameters are positive, \eqref{eq:eig} shows that $A_{cl}$ is Hurwitz. Since $x_r(t),\dot{x}_r(t)\ne0$ would make \eqref{eq:cl_lin} a linear affine system with input $u=[x_r \,\,\dot{x}_r]^\intercal$, the state $x(t)$ is bounded for any bounded and sufficiently smooth signal $x_r(t)$.

\section{Transfer functions from $x_r$ and its derivatives to $\hat{e}$}\label{app:trans} 

For the PD controller, the closed-loop system with interaction force \eqref{eq:rigidenv} can be described by the matrices:

\begin{equation}
\begin{aligned}
 &      A=\begin{bmatrix}
        0 & 1 & 0 & 0\\
        -(k_e+k_1)/m & -k_2/m & k_1/m & k_2/m\\
        0 & 0 & 0 & 1\\
        k_e/m & b/m_l & -k_d/m_l & -b_d/m_l\\
    \end{bmatrix},\\
&     B = \begin{bmatrix}
    0 & 0 & 0 \\
    0 & 0 & 0 \\
    0 & 0 & 0 \\
    k_d/m_d & b_d/m_d & 1
    \end{bmatrix}, \quad C=\left[\begin{array}{c c c c}
    -1 & 0 & 1 & 0 \\
    \end{array}\right],
\end{aligned}
\end{equation}
with the state, input and output given by $x=[x_e\,\,\dot{x}_e\,\,\hat{x}\,\,\dot{\hat{x}}]^\intercal$, $u=[x_r\,\,\dot{x}_r\,\,\ddot{x}_r]$, and $y=\hat{e}$. Then, the input-output behavior of this system is described by $G(s)=C(sI-A)^{-1}B$, with

\begin{equation}
    G(s)=\frac{1}{D(s)}\left[\begin{array}{c c c}
    N_1(s) & N_2(s) & N_3(s)
    \end{array}\right],
\end{equation}
in which 
\begin{equation}
    N_1=\frac{k_d}{b_d}N_2=\frac{k_d}{m_d}N_3=k_d(ms^2+k_e),
\end{equation}
and
\begin{dmath}
    D(s)=mm_ds^4+(b_dm+k_2m_d)s^3+\left(m_d(k_e+k_1)+b_dk_2+k_dm\right)s^2+\left(b_d(k_e+k_1)+k_2(k_e+k_d)\right)s+k_1(k_e+k_d)+k_ek_d.
\end{dmath}
\end{document}